\newcommand {\ie} {{\em i.e., }}
\newcommand {\beq} {\begin{equation}}
\newcommand {\eeq} {\end{equation}}
\newcommand {\bequn} {\begin{equation*}}
\newcommand {\eequn} {\end{equation*}} 
\newcommand {\bear} {\begin{eqnarray}}
\newcommand {\eear} {\end{eqnarray}}
\newcommand {\bearun} {\begin{eqnarray*}}
\newcommand {\eearun} {\end{eqnarray*}}
\newcommand {\Eqref}[1]{Eq.~(\ref{#1})}
\newtheorem{theorem}{Theorem}
\newtheorem{lemma}[theorem]{Lemma}
\newcounter{chl}
\title{When Two is Worse Than One}
\author{R.~Gu\'{e}rin\\
Department of Computer Science \& Engineering\\
Washington University in St. Louis\\
{\tt guerin@wustl.edu}
}
\date{June 5, 2019}
\begin{document} 
\maketitle
\begin{abstract}
This note is concerned with the impact on job latency of splitting a token bucket into multiple
sub-token buckets with equal aggregate parameters and offered the same
job arrival process.  The situation commonly arises in distributed computing
environments where job arrivals are rate controlled (each job needs one token to
enter the system), but capacity limitations call for distributing jobs across
multiple compute resources with scalability considerations preventing the use
of a centralized rate control component (each compute resource is responsible
for monitoring and enforcing that the job stream it receives conforms to a certain
traffic envelope). The question we address is to what
extent splitting a token bucket into multiple sub-token buckets that individually
rate control a subset of the original arrival process affects job latency, when
jobs wait for a token whenever the token bucket is empty upon their arrival.
Our contribution is to establish that independent of the job arrival process and
how jobs are distributed across compute resources (and sub-token buckets), 
splitting a token bucket always increases the sum of job latencies in the 
token buckets, and consequently the average job latency.

\smallskip
\noindent
{\bf Keywords}: Latency, rate control, token bucket, distributed computing
\end{abstract}

\pagenumbering{arabic}

\section{Model, Assumptions, and Motivations}

Consider a two-parameters token bucket~\cite{rfc2697} $(r,b)$ where
$r$ denotes the token rate (in messages/sec) and $b$ the allowed burst
size (in jobs or messages). In other words the number of jobs that can
leave the token bucket in any time interval of duration $\Delta t$
(the arrival curve to the system downstream of the token bucket) is
upper-bounded by $b+r\cdot \Delta t$.  Job arrivals to the token
bucket follow an arbitrary arrival process and each job consumes
one token.  Jobs that find an available token upon their arrival
immediately clear the token bucket without incurring any
delay. Jobs that arrive to an empty token bucket (or a token
bucket with only a fraction of a token) wait until a full token is
available before they are allowed to leave the token bucket. The
waiting space at the token bucket is assumed large enough (infinite)
to ensure that jobs waiting for tokens are never lost.

Of concern is the latency that jobs can incur in the token bucket.
The primary job latency metric of interest is the sum of the
job latencies, or conversely the average job latency, \ie the
sum of job latencies divided by the number of jobs.
More specifically, the problem we are investigating in this note is
the impact on job latency when replacing a one-bucket system
$(r,b)$ with a two (or more) bucket system consisting of two separate
sub-token buckets $(r_1,b_1)$ and $(r_2,b_2)$, where $r=r_1+r_2$ and
$b=b_1+b_2$.  In the two-bucket system, the original stream of
arrivals is split arbitrarily across the two sub-token buckets at the
times of job arrivals, and each sub-token bucket has an infinite
queue where jobs waiting for tokens can be stored.  Since jobs
are indistinguishable, we initially assume for simplicity that they
are served in first-served-first-come (FCFS) order\footnote{Note that
  a FCFS service order is known to minimize the sum of job
  latencies in both single-server and multi-server systems when
  service times are constant~\cite{uuganbaatar11}.} in both the
one-bucket and two-bucket systems, though as we shall see the main
results hold under arbitrary service ordering.

The primary motivation for the investigation is that of Distributed
Rate Limiting (DRL) systems that arise in distributed
computing environments as found in the cloud or
datacenters~\cite{tyk,yahoo}. In such settings, users specify a job
traffic profile in the form of a token bucket, while the compute
service provider provisions resources to ensure an agreed upon Service
Level Objective (SLO) that commonly includes (average) latency.
Because of resource constraints, it is often necessary for the
provider to distribute the user's jobs across multiple compute
facilities. For scalability, rate control is performed separately at
each compute facility, which in turn calls for splitting the original
token bucket into multiple sub-token buckets, one for each compute
facility~\cite{raghavan17}.  Furthermore, ensuring that the user job
arrival process still conforms to the original traffic envelope, calls
for preserving the total job arrival rate and burst size across
sub-token buckets.

Towards investigating the performance of a DRL system, we first note
that under the assumptions of a general job arrival process with
each job requiring exactly one token, a token bucket system with
unit token rate, \ie $r=1$, and a bucket size of $b$ tokens behaves
like a modified G/D/1 queue with unit service times.  The modification
is that in the token bucket system, jobs experience a delay if and
only if the queue content in the G/D/1 system exceeds $b-1$.  In other
words, the token bucket delay $d_i$ of the $i^{th}$ job can be
obtained from the system time of this job in the corresponding
G/D/1 system as follows:
\beq
\label{eq:tb_delay}
d(a_i)=\max\{0,U(a_i^-)+1-b\}\, ,
\eeq
where $U(a_i^-)+1$ corresponds to the unfinished work found in the
G/D/1 queue by the $i^{th}$ job upon its arrival at time $a_i$
plus its own contribution to the unfinished work, and $b$ is the
bucket size.

Next, we proceed to compare the relative (latency) performance of a
one-bucket system to that of a multi-bucket system obtained by splitting
the one-bucket system as described above.  In particular, we establish
that splitting a token bucket in two (or more) sub-token buckets
always increases the sum of the job latencies, and hence the
average job latency.

\section{One vs.~Two or more Token Buckets}
\label{sec:1vs2}

Towards establishing the result that splitting a token bucket can only
worsen the sum of job latencies, we first state a simple Lemma.
\begin{lemma}
\label{lemma:uw}
At any point in time $t$, the unfinished work $U(t)$ in a
work-conserving G/D/1 queue is smaller than or equal to the total
unfinished work $U^{(k)}(t)=\sum_{l=1}^kU_l(t)$ in a set of $k$
work-conserving G/D/1 queues with the same aggregate service rate and
fed the same arrival process.
\end{lemma}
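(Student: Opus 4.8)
The plan is to argue entirely on sample paths and to reduce the claim to a comparison of drain rates. First I would introduce the difference process $\phi(t)=U^{(k)}(t)-U(t)$ and record two structural facts. Both systems are fed the identical arrival stream and are empty at $t=0$, so $\phi(0)=0$; and since every arriving job contributes one unit of work to the single pooled queue and one unit to whichever sub-queue it is routed to, each arrival epoch raises $U$ and $U^{(k)}$ by exactly the same amount. Hence $\phi$ has no upward jumps: it is continuous, and the entire lemma reduces to showing that $\phi(t)\ge 0$ for all $t$.

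The crux is how the two systems drain \emph{between} arrivals. Writing $c=\sum_{l=1}^{k}c_l$ for the common aggregate service rate, the pooled work-conserving server empties its backlog at the full rate $c$ whenever $U(t)>0$, so $\tfrac{d}{dt}U(t)=-c$ on any busy interval. In the split system, server $l$ drains only while its own sub-queue is nonempty, so $\tfrac{d}{dt}U^{(k)}(t)=-\sum_{l:\,U_l(t)>0}c_l\ge -c$. This is precisely the ``wasted capacity'' effect: if some sub-queue sits idle while others are backlogged, the split system drains strictly more slowly than the pooled one. I expect this drain-rate inequality to be the substantive content of the proof.

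With these two facts in hand I would argue by contradiction. Suppose $\phi$ ever becomes negative and set $t_0=\inf\{t:\phi(t)<0\}$; continuity of $\phi$ together with $\phi(0)=0$ forces $\phi(t_0)=0$, i.e. $U(t_0)=U^{(k)}(t_0)$. If this common value is $0$, then every sub-queue is empty, both systems remain empty until the next (shared) arrival, and $\phi$ cannot decrease there. If instead $U(t_0)>0$, the pooled server is busy and the right derivative of $\phi$ at $t_0$ equals $c-\sum_{l:\,U_l(t_0)>0}c_l\ge 0$, so $\phi$ is nondecreasing just to the right of $t_0$, contradicting that it dips below $0$. In either case no first violation can occur, yielding $\phi\ge 0$ throughout and hence $U(t)\le U^{(k)}(t)$.

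The step that demands the most care is the behaviour at the critical instant $t_0$: the unfinished-work paths are only piecewise linear, so I would phrase the argument through one-sided (right) derivatives and explicitly verify that the cancellation of the arrival jumps makes $\phi$ genuinely continuous, so that the infimum is attained with $\phi(t_0)=0$ and the case split above is exhaustive. It is worth noting that this argument never uses that service times are deterministic or unit-valued; it is a general work-conservation comparison, which is exactly why it will transfer unchanged to the token-bucket setting once the token-bucket delay is expressed in terms of unfinished work via \Eqref{eq:tb_delay}.
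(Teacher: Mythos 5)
Your proof is correct, and it rests on exactly the two facts that drive the paper's proof: both systems receive identical work arrivals (so the difference process $\phi(t)=U^{(k)}(t)-U(t)$ has no jumps), and the pooled work-conserving server drains at least as fast as the split servers combined (the paper's $r^{(k)}(u)\leq 1$, your $\sum_{l:\,U_l(t)>0}c_l\leq c$). Where you genuinely differ is in how that drain-rate domination is propagated over time. The paper inducts over the busy periods of the pooled queue: within the $j^{th}$ busy period it writes the explicit work balance $U(t)=W(t_j,t)-(t-t_j)$ and $U^{(k)}(t)=U^{(k)}(t_j^-)+W(t_j,t)-\int_{t_j}^{t}r^{(k)}(u)\,du$, and the inequality follows immediately since $U^{(k)}(t_j^-)\geq 0$ and $\int_{t_j}^{t}r^{(k)}(u)\,du\leq t-t_j$; outside busy periods $U(t)=0$ and the claim is trivial. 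That integrated, per-busy-period comparison sidesteps precisely the delicate point you flag in your own argument: the continuity of $\phi$ and the one-sided-derivative analysis at the critical instant $t_0$, including the edge cases where the set of nonempty sub-queues changes or an arrival lands exactly at $t_0$. Your minimum-principle (first-violation) argument is a valid alternative, and it has the merit of making explicit something the paper leaves implicit: the comparison uses only work conservation and equality of arriving work, never that service requirements are deterministic or unit-valued --- a point worth recording since Theorem~\ref{theo:1vs2} applies Lemma~\ref{lemma:uw} to sub-buckets with unequal rates. In short, both proofs formalize the same intuition; the paper's busy-period decomposition buys a cleaner, purely algebraic verification, while your differential argument buys a statement whose generality is more visibly independent of the G/D/1 structure.
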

\begin{proof}
The result directly stems from the observation that when fed the same
arrival process, $k$ parallel work-conserving G/D/1 queues never clear
work faster than a single work-conserving G/D/1 queue with the same
aggregate service rate.  Specifically, at any point in time both the
one-queue and the $k$-queues system have received the same amount of
work (they are fed the same set of arrivals), both systems are
work-conserving, and the one-queue system processes work at least as
fast as the $k$-queue system whenever it is not empty, so that it can
never have more unfinished work than the $k$-queue system.

Formally, we assume that up to the start of the $j^{th}$ busy period
of the one-queue system, the unfinished work in the one-queue system
has always been smaller than or equal to that of the $k$-queue
system, and wlog we assume that the one-queue system has unit service
rate.  We establish the result by induction on the busy periods of the
one-queue system.

Denote as $t_j$ the start of the $j^{th}$ busy period of the one-queue
system, and let $T_j$ denote the duration of that busy period.  The
unfinished work in the one-queue system during that busy period is
then of the form $U(t)=U(t_j^-)+W(t_j,t)-(t-t_j)=W(t_j,t)-(t-t_j)\,
,\,\forall t\in[t_j,t_j+T_j]$, where $W(t_j,t)$ represents the amount
of work that has arrived in $[t_j,t]$, and we have used the fact that
by definition the unfinished work just before the start of a busy
period is~$0$.  Similarly, the unfinished work in the $k$-queue system
is of the form $U^{(k)}(t) = U^{(k)}(t_j^-)+W(t_j,t)-
\int_{t_j}^tr^{(k)}(u)du\geq W(t_j,t)-(t-t_j)=U(t)\, ,\,\forall
t\in[t_j,t_j+T_j]$, where we have used the facts that
$U^{(k)}(t_j^-)\geq U(t_j^-)=0$ (from our induction hypothesis),
$r^{(k)}(u)\leq 1$, \ie the aggregate service rate in the $k$-queue
system can never exceed the unit service rate of the one-queue system,
and both systems receive the same amount of work $W(t_j,t)$.
Furthermore, because by definition of a busy period $U(t_j+T_j)=0$ and
both the one-queue and the $k$-queue system see the same arrivals, we
also have $0=U(t)\leq U^{(k)}(t)\, ,\,\forall t\in[t_j+T_j,t_{j+1})$,
  where $t_{j+1}$ is the start time of the $(j+1)^{th}$ busy period of
  the one-queue system, \ie the time of the next arrival after
  $t_j+T_j$.  This establishes that the unfinished work in the
  one-queue system remains smaller than or equal to that in the
  $k$-queue system until the start of the $(j+1)^{th}$ busy period of
  the one-queue system.  This completes the proof of the induction
  step.
\end{proof}

We are now ready to state our main result, which establishes that
splitting a two-parameter token bucket $(r,b)$ into multiple sub-token
buckets $(r_l,b_l),\, l=1,\ldots,k,$ with equivalent aggregate
parameters $r=\sum_{l=1}^kr_l$ and $b=\sum_{l=1}^kb_l$, is never
beneficial when it comes to the overall (sum or average)
job latency introduced by the rate control enforcement of the
token bucket.

\begin{theorem}
\label{theo:1vs2}
Given a two-parameter token bucket $(r,b)$ and a general job
arrival process where jobs each require one token to exit the
bucket, splitting this one-bucket system into multiple, say, $k$,
sub-token buckets with parameters $(r_l,b_l)$ such that
$r=\sum_{l=1}^kr_l$ and $b=\sum_{l=1}^kb_l$, can never improve the sum
of the job latencies, irrespective of how jobs are distributed
to the $k$ sub-token buckets.  More generally, denoting as $S(t)$ and
$S^{(k)}(t)$ the sum of the delays accrued by all jobs up to time
$t$ in the one-bucket and $k$-bucket systems, respectively, we have
\beq
\label{eq:theo}
S(t)\leq S^{(k)}(t)\,, \,\forall t
\eeq
\end{theorem}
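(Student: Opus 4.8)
The plan is to stop reasoning job-by-job --- the per-job delays of \Eqref{eq:tb_delay} do not line up across the two systems, since jobs are routed to different sub-buckets and measured against different burst sizes $b_l$ --- and instead aggregate delay through a Little's-law-type area argument. Concretely, letting $B(u)$ denote the number of jobs waiting for a token in the one-bucket system at time $u$, the delay accrued up to time $t$ is exactly the area $S(t)=\int_0^t B(u)\,du$, since each waiting job contributes at unit rate to this integral until it receives its token; likewise $S^{(k)}(t)=\int_0^t B^{(k)}(u)\,du$ with $B^{(k)}(u)=\sum_{l=1}^k B_l(u)$. It therefore suffices to prove the pointwise backlog inequality $B(u)\le B^{(k)}(u)$ for every $u$ and then integrate.

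The first step is to characterize the backlog in terms of the reference G/D/1 queue of Lemma~\ref{lemma:uw}. Rewriting \Eqref{eq:tb_delay} in departure-time form, job $i$ receives its token at $\max\{a_i,D_i-b\}$, where $D_i=a_i+U(a_i^-)+1$ is its departure epoch in the unit-rate G/D/1 queue, so it is waiting at time $u$ iff it has arrived ($a_i\le u$) and still has more than $b$ units of work ahead of it, \ie iff it is among the jobs present in the reference queue at $u$ that will not clear within the next $b$ units of service. Because the queue serves at unit rate whenever nonempty, exactly the first $b$ present jobs clear in that window, giving $B(u)=\max\{0,N(u)-b\}$, where $N(u)$ is the number of jobs in the reference G/D/1 queue at time $u$. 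The identical argument applied to sub-bucket $l$ (service rate $r_l$, burst $b_l$, so that $b_l/r_l$ time units of service clear exactly $b_l$ whole jobs) yields $B_l(u)=\max\{0,N_l(u)-b_l\}$; notice the rates $r_l$ cancel, leaving a statement purely about job counts.

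The second step upgrades Lemma~\ref{lemma:uw} from unfinished work to job counts: since $N(u)=\lceil U(u)\rceil$ and $N_l(u)=\lceil U_l(u)\rceil$, monotonicity and subadditivity of the ceiling yield $N(u)=\lceil U(u)\rceil\le\lceil\sum_l U_l(u)\rceil\le\sum_l\lceil U_l(u)\rceil=\sum_l N_l(u)$, directly from $U(u)\le U^{(k)}(u)$. Finally, combining $N(u)\le\sum_l N_l(u)$ with $b=\sum_l b_l$ and the subadditivity of the positive-part map $x\mapsto\max\{0,x\}$ gives $\max\{0,N(u)-b\}\le\max\{0,\sum_l(N_l(u)-b_l)\}\le\sum_l\max\{0,N_l(u)-b_l\}$, which is exactly $B(u)\le B^{(k)}(u)$; integrating over $[0,t]$ establishes \Eqref{eq:theo}.

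The step I expect to be the main obstacle is the backlog characterization $B(u)=\max\{0,N(u)-b\}$: it hinges on the reference queue staying continuously busy over the relevant service window, on correctly handling the fractionally-served leading job and the boundary instants where $U$ crosses an integer, and it is the one place where the service rates $r_l$ must be tracked carefully to confirm that $b_l/r_l$ time at rate $r_l$ clears precisely $b_l$ whole jobs. Everything afterward --- lifting Lemma~\ref{lemma:uw} to counts and the two subadditivity inequalities --- is routine.
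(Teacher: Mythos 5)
Your proposal is correct and takes essentially the same route as the paper's own proof: reduce the sum of delays to the time-integral of the number of token-waiting jobs, express that count through the reference G/D/1 queue (the paper writes it as $\lceil\max\{0,U(t)-b\}\rceil$, which equals your $\max\{0,\lceil U(t)\rceil-b\}$ for integer $b$), and combine Lemma~\ref{lemma:uw} with $b=\sum_l b_l$ and ceiling/positive-part subadditivity --- the paper merely does this last step by explicit case analysis and handles $k>2$ by pairwise recursion instead of directly. The one caveat is that your backlog formula $B_l(u)=\max\{0,N_l(u)-b_l\}$ implicitly assumes each $b_l$ is an integer (the paper's form, with the ceiling taken outside the subtraction, does not), but this is immaterial under the natural reading that burst sizes are whole numbers of jobs.
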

\begin{proof}
We first establish the result for the case $k=2$, and wlog assume that
$r=1$. 

The proof is simply based on the fact that jobs waiting for tokens
in either system accrue delay at the same rate, and establishing that
at any time $t$ the number $N(t)$ of jobs experiencing delays in
the one-bucket system is less than or equal to the number
$N_1(t)+N_2(t)$ of such jobs in the two-bucket system. Note that
the sum of the job delays incurred in either system up to time $t$
is of the form
\bearun
S(t) &=& \int_0^tN(u)du \\
S^{(2)}(t) &=& \int_0^t\left(N_1(u)+N_2(u)\right) du
\eearun
Hence, if $N(t)\leq N_1(t)+N_2(t)\,, \,\forall t$, then $S(t)\leq
S^{(2)}(t)\,, \,\forall t$, which proves the result for $k=2$.  We
therefore proceed to establish that $N(t)\leq N_1(t)+N_2(t)\,,
\,\forall t$.

The number $N(t)$ of jobs waiting for tokens, \ie accruing delay,
at time $t$ in a one-bucket system with bucket size $b$ is of the
the form 
\bequn
N(t)=\left\lceil \max\{0,U(t)-b\}\right\rceil \, ,
\eequn
where $\lceil x\rceil$ represents the ceiling of~$x$, $U(t)$ is the
unfinished work in the corresponding G/D/1/ queue, and consistent with
\Eqref{eq:tb_delay} we have used the fact that jobs are delayed in
the token bucket only when the unfinished work in the G/D/1 queue
exceeds the bucket size~$b$.

Similarly the total number of jobs waiting for tokens in a
two-bucket system with bucket sizes $b_1$ and $b_2$ such that
$b=b_1+b_2$ is of the form
\bequn
N_1(t)+N_2(t)=\left\lceil \max\{0,U_1(t)-b_1\}\right\rceil
             +\left\lceil \max\{0,U_2(t)-b_2\}\right\rceil
\eequn
Since we know that $\lceil x\rceil\leq \lceil x_1\rceil+\lceil
x_2\rceil$, when $x\leq x_1+x_2$, we focus on establishing that 
\beq
\label{eq:max}
\max\{0,U(t)-b\}\leq \max\{0,U_1(t)-b_1\}+\max\{0,U_2(t)-b_2\}
\eeq
From Lemma~\ref{lemma:uw}, we know that $U(t)\leq U_1(t)+U_2(t)$.
Next, we consider separately the cases $U(t)-b\leq 0$ and $U(t)-b>0$.

\noindent
{\bf Case~$1$}: $U(t)-b\leq 0$

In this case, \Eqref{eq:max} is trivially verified.

\noindent
{\bf Case~$2$}: $U(t)-b>0$

We further separate this case in two separate sub-cases:

\noindent
{\bf Case~$2a$}: $U_1(t)-b_1\leq 0$ and $U_2(t)-b_2\geq 0$ (or interchangeably
$U_1(t)-b_1\geq 0$ and $U_2(t)-b_2\leq 0$) 

In this case, \Eqref{eq:max} simplifies to
\bequn
U(t)-b\leq U_2(t)-b_2
\eequn
Applying again the result of Lemma~\ref{lemma:uw}, we have
\bequn
U(t)\leq U_1(t)+U_2(t)\,\Rightarrow \, U(t)-b\leq
U_1(t)+U_2(t)-b_1-b_2\, \Rightarrow \, U(t)-b\leq U_2(t)-b_2\, ,
\eequn
where we have used the fact that $b=b_1+b_2$ and $U_1(t)-b_1\leq
0$. Hence, \Eqref{eq:max} again holds in Case~$2a$.

\noindent
{\bf Case~$2b$}: $U_1(t)-b_1\geq 0$ and $U_2(t)-b_2\geq 0$

In this case, \Eqref{eq:max} becomes
\bequn
U(t)-b\leq U_1(t)-b_1+U_2(t)-b_2\, ,
\eequn
which again holds because of Lemma~\ref{lemma:uw} and the fact that
$b=b_1+b_2$. 

Since the case $U_1(t)-b_1\leq 0$ and $U_2(t)-b_2\leq 0$ is not
possible under Case~$2$ (it would violate Lemma~\ref{lemma:uw}), this
establishes that \Eqref{eq:max} holds in all cases.  Accordingly,
$N(t)\leq N_1(t)+N_2(t)\,, \,\forall t$, so that as mentioned earlier,
$S(t)\leq S^{(2)}(t)\,, \,\forall t$, which establishes the result for
$k=2$.

Extending the result to $k>2$ is readily accomplished by applying the
above approach recursively to groups of two sub-token buckets.
\end{proof}

In concluding, we note that while \Eqref{eq:tb_delay} assumed an FCFS
service ordering for jobs in the token bucket, both
Lemma~\ref{lemma:uw} and Theorem~\ref{theo:1vs2} are independent of
the order in which jobs waiting for tokens are scheduled for
transmission, as long as the schedule is ``work-conserving,'' \ie
jobs (any waiting job) leave as soon as one full token is
available. In other words, available tokens are not split across
multiple waiting jobs.

\section*{Acknowledgment}
This work was supported by NSF grant CNS 1514254. Any opinions,
findings, and conclusions or recommendations expressed in this
material are those of the author and do not necessarily reflect the
views of the National Science Foundation.

\end{document}